\newcommand{\NOT}[1]{\ensuremath{{\sim}\,#1}}
\newcommand{\name}{DAEMON}
\begin{document}

\title{Recommending Related Products Using Graph Neural Networks in Directed Graphs}
\titlerunning{\name}
%

\author{Srinivas Virinchi\inst{1}\Letter  \and
Anoop Saladi\inst{1}\and
Abhirup Mondal\inst{1}}

\toctitle{DAEMON}
\tocauthor{Srinivas, Anoop, Abhirup}
\authorrunning{Srinivas et al.}
%
\institute{International Machine Learning, Amazon, Bengaluru, India\\
\email{\{virins,saladias,mabhirup\}@amazon.com}}
%

%

\maketitle              
\begin{abstract}
Related product recommendation (RPR) is pivotal to the success of any e-commerce service. In this paper, we deal with the problem of recommending related products i.e., given a query product, we would like to suggest top-$k$ products that have high likelihood to be bought together with it. 
Our problem implicitly assumes asymmetry i.e., for a phone, we would like to recommend a suitable phone case, but for a phone case, it may not be apt to recommend a phone because customers typically would purchase a phone case only while owning a phone.
We also do not limit ourselves to complementary or substitute product recommendation. For example, for a specific night wear t-shirt, we can suggest similar t-shirts as well as track pants. So, the notion of relatedness is subjective to the query product and dependent on customer preferences. 
Further, various factors such as product price, availability lead to presence of selection bias in the historical purchase data, that needs to be controlled for while training related product recommendations model.
These challenges are orthogonal to each other deeming our problem non-trivial. 
To address these, we propose \name, a novel Graph Neural Network (GNN) based framework for related product recommendation, wherein the problem is formulated as a node recommendation task on a directed product graph. 
In order to capture product asymmetry, we employ an asymmetric loss function and learn dual embeddings for each product, by appropriately aggregating features from its neighborhood. \name~leverages multi-modal data sources such as catalog metadata, browse behavioral logs to mitigate selection bias and generate recommendations for cold-start products.
Extensive offline experiments show that \name~outperforms state-of-the-art baselines by 30-160\% in terms of HitRate and MRR for the node recommendation task. In the case of link prediction task, \name~presents 4-16\% AUC gains over state-of-the-art baselines. 
\name~delivers significant improvement in revenue and sales as measured through an A/B experiment.
\keywords{Related product recommendation \and Graph Neural Networks \and Directed Graphs \and Selection bias.}
\end{abstract}
\section{Introduction}
\label{intro}
Related product\footnote{We use product, item and node interchangeably in this paper.} recommendation (RPR) plays a vital role in helping customers easily find right products on e-commerce websites and hence, critical to their success. It not only helps customers discover new related products, but also simplifies their shopping effort, thereby delivering a great shopping experience.
In this paper, we are interested in related product recommendation problem: \textit{given a query product, 
the goal is to recommend top-$k$ products that have a high likelihood to be bought together with it.} For notation purpose, let $a$, $b$ and $c$ be any three products. Let $R_{cp}$, $R_{cv}$ and $R_{like}$ be binary relationships where, $aR_{cp}b$ represents that $a$ is co-purchased with $b$, $aR_{cv}b$ represents that $a$ is co-viewed with $b$, and $aR_{like}b$ represents that $a$ is similar to $b$ based on its product features.
\begin{figure}[!htb]%
    \centering
    {\includegraphics[scale=.25]{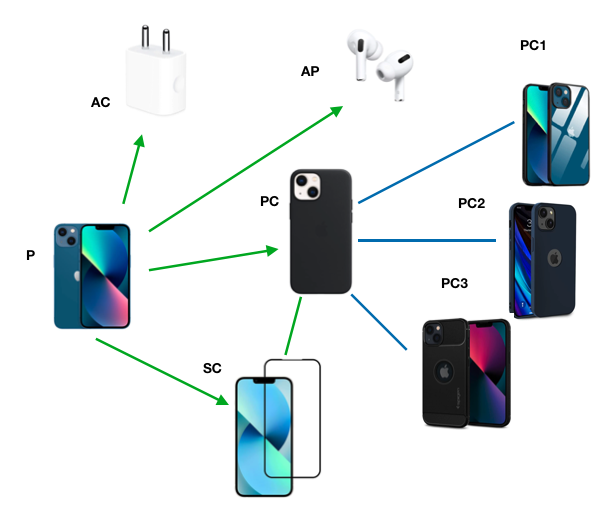} }%
    \caption{Sample product graph. Green and blue edges represent co-purchase and co-view edges respectively. $P$ refers to an iPhone 13, $AC$ is an adapater, $AP$ refers to AirPods and $SC$ refers to a screen guard. Phone case $PC$ is co-viewed with other phone cases $PC1$, $PC2$ and $PC3$; $PC1$, $PC2$ and $PC3$ are similar to PC. Undirected edges are bidirectional.}%
    \label{fig:prob}%
\end{figure}
Consider a sample product graph in Figure~\ref{fig:prob}, where a node corresponds to a product and an edge corresponds to a co-purchase or co-view relationship between the products (refer to Section~\ref{graph-construction}  for graph construction). RPR problem entails few challenges which we illustrate using Figure~\ref{fig:prob} as follows: 1) \textit{product relevance:} Given a query product $P$, we would like to suggest related products such as an adapter $AC$, phone case $PC$, AirPods $AP$, screen guard $SC$. 2) \textit{product asymmetry:} For a phone P, we would like to recommend a suitable phone case PC, but for a phone case PC, it may not be apt to recommend a phone $P$ because customers typically would purchase a phone case only while owning a phone. Formally, product asymmetry can be represented as $a R_{cp}b \not\Rightarrow b R_{cp}a$. 3) \textit{selection bias} is inherent to historical purchase data due to several factors like product availability, price etc. For example, while shopping for a phone case for phone $P$, customer $c_1$ might browse phone  cases $PC$ and $PC1$ ($PC2$ and $PC3$ are not shown to him owing to some of the factors mentioned above). Another customer $c_2$ might browse phone cases $PC$ and $PC2$ ($PC1$ and $PC3$ are not presented to him). A third customer $c_3$ might browse phone cases $PC$ and $PC3$ ($PC1$ and $PC2$ are not presented to him). Assume that all three customers eventually purchase $PC$; this could also be due to a bias in the list of products shown to each customer. In our example, across different customers, $PC$, $PC1$, $PC2$ and $PC3$ are co-viewed (similar). 
In order to correct for selection bias, we would like to recommend not only phone case $PC$, but also $PC1$, $PC2$ and $PC3$ as related products given a query product $P$. 
Formally, we want to uncover relationships of the form: a) $aR_{cp}b \;\land \; bR_{cv}c \implies aR_{cp}c$, b) $aR_{cv}b \;\land \; bR_{cp}c \implies aR_{cp}c$ in order to mitigate selection bias. 4) \textit{cold-start product recommendations:} We not only need to suggest recommendations for existing products, but also suggest recommendations for newly launched i.e. cold-start products. Formally, given two existing products $a$ and $b$, and a cold-start product $c$, we want to uncover relationships of the form: a) $cR_{like}a \;\land \; aR_{cp}b \implies cR_{cp}b$, b) $aR_{cp}b \;\land \; bR_{like}c \implies aR_{cp}c$. These two rules correspond to the case when $c$ is the query product and the recommended product respectively. Observe that mitigating selection bias and tackling cold-start products deal with modelling transitive relationships, while preserving edge asymmetry. 5) Dealing with millions of products in our catalog demands a \textit{scalable} solution for the recommendation task. These challenges are uncorrelated with each other making our problem non-trivial.

Given multi-modal data sources such as catalog metadata, product co-purchase data, anonymized browse behavioral logs, etc. product graphs serve as an excellent abstraction to seamlessly capture relationships between products.
RPR boils down to the node recommendation problem~\cite{app,hope} in directed product graphs. Recent work on node representation learning in directed graphs such as HOPE~\cite{hope}, APP~\cite{app}, NERD~\cite{nerd}, DGGAN~\cite{zhu2020adversarial} and Gravity Graph VAE~\cite{salha2019gravity} learn two embeddings for each node and utilize them for downstream tasks. Further, GNN models like DGCN~\cite{dgcn} and DiGraphIB~\cite{digcn} generate real valued embeddings for each node, while MagNet~\cite{magnet} generates complex valued embeddings to preserve edge strength and direction. 
Selection bias in recommendation systems has been well studied~\cite{chen2020bias} for different applications. 
Prior work has limitations across different dimensions as follows: a) there is no straightforward approach to extend popular GNN models like GCN~\cite{gcn}, GraphSage~\cite{sage}, GAT~\cite{gat}, RGCN~\cite{rgcn} to model directed graphs as they do not model product asymmetry. b) GNN models for directed graphs like DGCN~\cite{dgcn}, DGGAN~\cite{zhu2020adversarial}, DiGraphIB~\cite{digcn} and MagNet~\cite{magnet} do not address the problem of recommendation for cold-start products. c) prior work does not address selection bias inherent to historical purchase data, and make specific assumption regarding the availability of unbiased data~\cite{wang2021combating}.


In this paper, we formulate RPR as a node recommendation task on a directed product graph. We propose \name, \textbf{D}irection \textbf{A}war\textbf{E} Graph Neural Network \textbf{MO}del for \textbf{N}ode recommendation. In order to capture product asymmetry, our model generates dual embeddings for each product, by appropriately aggregating features from its neighborhood. We leverage customer co-view pairs from anonymized browse behavioral logs to recommend relevant products that customers clicked but didn't purchase to tackle selection bias. Specifically, during model training, we employ a novel asymmetric loss function which explicitly considers product co-purchases to capture product asymmetry, and product co-views to estimate co-purchase likelihood for products that were previously clicked but not purchased. 
We exploit product catalog metadata to generate product embeddings for cold-start products, and consequently generate recommendations for them. Further, as a byproduct, the proposed model is also able to suggest substitute product recommendations. 

We perform exhaustive experiments offline on real-world datasets to evaluate the performance of our model. Results show that \name~outperforms state-of-the-art baselines by 30-160\% in terms of HitRate and MRR for the node recommendation task which is the primary task of interest. For link prediction tasks, \name~outperforms state-of-the-art baselines by 4-16\% and 3-6.5\% in terms of AUC for the existence~\cite{magnet} and direction~\cite{magnet} link prediction task respectively. \name~delivers significant improvement in revenue and sales as measured through an A/B experiment.
To summarize, we make the following contributions: 
\begin{enumerate}
    \item We formulate RPR as a node recommendation task in a directed product graph and propose a Graph Neural Network (GNN) based framework for related product recommendation. To this end, we present \name, a novel GNN model, that leverages dual embeddings to capture node asymmetry in directed graphs. 
    \item In order to train the model, we employ a novel asymmetric loss function that explicitly deals with modelling co-purchase likelihood, product asymmetry and selection bias. We exploit product catalog metadata to deal with the issue of cold-start products. 
    This is the first work that jointly addresses product asymmetry, generating cold start recommendations and mitigating selection bias in historical purchase data.
    \item Offline evaluation shows that \name~outperforms state-of-the-art models for node recommendation and link prediction tasks. 
    \name~derives significant improvement in product sales and revenue as measured through an A/B experiment. 
\end{enumerate}
 
\section{Related Work}
\label{relatedwork}
Our problem conceptually relates to the node recommendation problem~\cite{app,hope} in directed graphs. Prior work uses random walk based techniques to model node relationships in directed graphs.
VERSE~\cite{verse}, HOPE~\cite{hope} propose learning two embeddings for each node to preserve higher order proximity and consequently, node asymmetry in directed graphs. APP~\cite{app} captures asymmetry by preserving Rooted PageRank between nodes by relying on random walk with restart strategy. ATP~\cite{sun2019atp} addresses the problem of question answering by embeddings nodes of directed graph by preserving asymmetry. However, their approach is restricted to only directed acyclic graphs (DAGs), while an e-commerce product graph can be cyclic. NERD~\cite{nerd} learns a pair of role specific embeddings for each node using a alternating random walk strategy to model edge strength and direction in directed graphs. 

Recent work has seen GNN's being designed for directed graphs. DGCN~\cite{dgcn} extends the spectral-based GCN model to directed graphs by using first and second-order proximity to expand the receptive field of the convolution operation. APPNP~\cite{appnp} uses a GCN model to approximate personalized PageRank. DGCN~\cite{dgcn} uses one first-order matrix and two second order proximity matrices to model asymmetry. 
DiGraphIB~\cite{digcn} builds upon the ideas of \cite{dgcn} and constructs a directed Laplacian of a PageRank matrix. It uses an inception module to share information between receptive fields. Gravity GAE~\cite{salha2019gravity} is inspired from Graph Auto Encoders~\cite{kipf2016variational} by applying the idea of gravity to address link prediction in directed graphs. DGGAN~\cite{zhu2020adversarial} is based on Generative Adversarial Network by using a discriminator and two generators to jointly learn each node's source and target embedding. MagNet~\cite{magnet} proposes a GNN for directed graphs based on a complex Hermitian matrix. The magnitude of entries in the complex matrix encodes the graph structure while the directional aspect is captured in the phase parameter. Prior work in this space does not deal with the the issue of cold-start products. Selection bias in recommendation systems has been well studied~\cite{chen2020bias} for different applications. However, prior work does not address the bias inherent to historical purchase data. Further, we make no assumption regarding the availability of unbiased data in order to combat selection bias unlike \cite{wang2021combating}. Ours is the first work that jointly: a) learns node representations in directed graphs to capture edge strength and direction, b) generates recommendations for cold-start products by leveraging catalog metadata, and c) mitigates selection bias in product co-purchase directed graphs. 

\section{Related Product Recommendation Problem}
\begin{table}[!htb]
  \caption{Notation}
  \label{tab:notation}
  \centering
  \scalebox{0.8}{
  \begin{tabular}{ll}
    \toprule
    \textbf{Notation}     & \textbf{Description}  \\
    \midrule
    $P$ & set of products in catalog \\
    $i \in P$ & product $i$ in catalog \\
    $X_i$ & input feature of product $i$ \\
    $\theta^{s}_i$ & source embedding of product $i$\\
    $\theta^{t}_i$ & target embedding of product $i$\\
    $G$ & directed product graph\\
    $q \in P$ & query product  \\ 
    $R^q_k$ & top-$k$ related products for query product $q$\\
    $CP$ & product co-purchase pairs\\
    $CV$ & product co-view pairs\\
    $E_{cp}$ & product co-purchase edges \\
    $E_{cv}$ & product co-view edges \\
    \bottomrule
  \end{tabular}}
\end{table}
We present relevant notation in Table~\ref{tab:notation}.
Let $G$ be a directed product graph with products $P$ as the nodes and directed edges corresponding to a co-purchase or co-view relationship between the products (refer section~\ref{graph-construction}). Further, every product $i$ ($i \in P, \forall i$) has an input feature $X_i$ from the product catalog metadata. Given a query node\footnote{We use product and node interchangeably in this paper based on the context.} $q$, the goal is to recommend $R^q_k$, top-$k$ related products that have a high likelihood to be bought together with $q$. 

This corresponds to the node recommendation problem~\cite{app,hope} in directed graphs. Note that while generating related product recommendations, we must jointly capture product co-purchase likelihood and preserve product asymmetry. 
\section{Proposed Framework}
\begin{figure}[!htb]%
    \centering
    {\includegraphics[height=8cm, width=11cm]{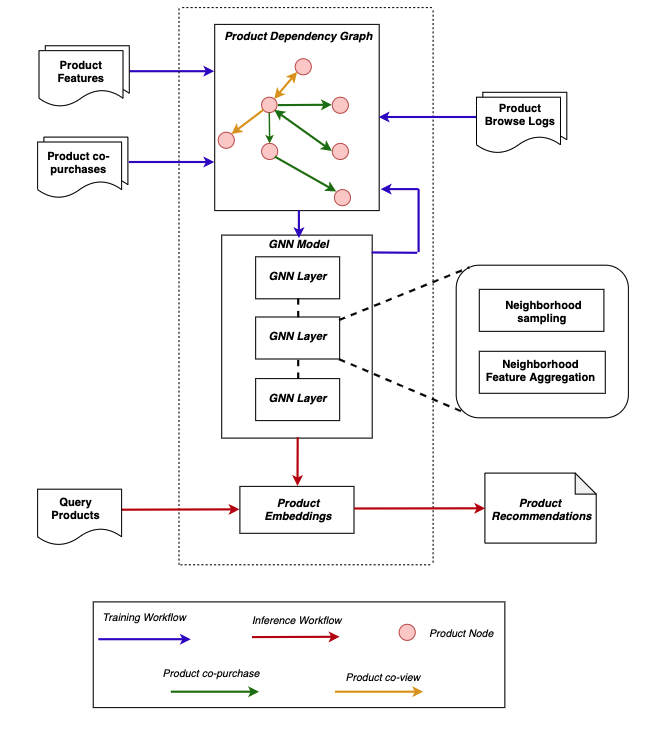} }%
    \caption{Proposed Framework for Related Product Recommendation. The computational graphs corresponding to the source (s) and target (t) embedding of node $A$ in the product graph is shown.}
    \label{fig:framework}%
\end{figure}
We show the proposed framework in Figure~\ref{fig:framework}. We leverage the co-purchase data $CP$ and co-view data $CV$ to create the product graph $G$. We represent each product by an input feature based on its metadata from catalog (product name, product description, product type etc.). In order to jointly model product co-purchase likelihood and product asymmetry, we represent each product using dual i.e. source and target embeddings. We train \name~using an asymmetric loss function.
The trained model generates product embeddings by appropriately aggregating features from its neighborhood in $G$.  We show the computational graphs corresponding to the source and target embedding of node $A$ in Figure~\ref{fig:framework}. For every node $u$, we generate a pair of embeddings (denoted by u-s and u-t in Figure~\ref{fig:framework}). In order to address node asymmetry, we use the following rationale: a) while generating the source embedding of a node, we aggregate information from the target embedding of its out-neighbors, b) while generating the target embedding of a node, we aggregate information from the source embedding of its in-neighbors. Compared to undirected graphs, the computational graph corresponding to the source and target embedding of each node is different in directed graphs. Consequently, this trick enables us to model edge strength and edge direction efficiently in directed graphs. Given a query product $q$, we generate its source embedding and perform a nearest neighbor lookup in the target embedding space to recommend top-$k$ related products for $q$. 
We explain each component in detail next.

\subsection{Product graph construction}
\label{graph-construction}
Given a set of product co-purchase pairs $CP$, we create the co-purchase edges $E_{cp} = \{(u,v) | \forall\;u,v \in P \land uR_{cp}v \}$. We use $E_{cp}$ to model the product co-purchase likelihood. However, $E_{cp}$ is prone to selection bias.  

In order to combat this issue, we exploit anonymized browse behavioral logs, to create product co-view pairs $CV$, which we use to create the set of co-view edges $E_{cv} = \{(u,v) | \forall\;u,v \in P \land uR_{cv}v\}$.
This lets us construct a directed product graph $G = (P, \{E_{cp} \cup E_{cv} \})$, which contains both co-purchase and co-view relationships between products. 

For any three products $a,b,c \in P$, $G$ contains product relationships of the form: 1) $aR_{cp}b \not\Rightarrow bR_{cp}a$ 2) $aR_{cp}b \;\land \; bR_{cv}c \implies aR_{cp}c$ 3) $aR_{cv}b \;\land \; bR_{cp}c \implies aR_{cp}c$. The goal is to design a GNN model that learns these kind of relationships from $G$, while preserving co-purchase likelihood; rule 1 corresponds to product asymmetry, and rules 2-3 are specific to the case of selection bias.
\textit{Observe that $E_{cp}$ and $E_{cv}$ are not correlated with each other, and we need to provide differential treatment to $E_{cp}$ and $E_{cv}$ during neighborhood feature aggregation.} 

\subsection{\name : Proposed GNN model}
\label{train}

We first describe the product embedding generation procedure i.e. forward pass of the model assuming that the model is already trained (Section~\ref{forward}). We then describe how the model parameters can be learned using stochastic gradient descent and backpropagation techniques (Section~\ref{backward}). 

\subsubsection{Product Embedding Generation: Forward Pass}
\label{forward}
 Algorithm~\ref{algo:forward} describes the forward pass to generate the source and target embedding of each product. It expects the product graph $G$ and product features $X$ as input.  
\begin{algorithm}[!htb]
  \caption{\name~ product embedding generation (i.e. forward pass) \label{algo:forward}}
  \textbf{Input: } product graph $G = (P, \{E_{cp} \cup E_{cv} \})$; input product features $\{X_u, \forall \in P\}$; Number of GNN layers L; weight matrices $W^l,\forall l\in \{1,2,..,L\}$\\
  \textbf{Output:} source embedding $\theta^{s}_u$ and target embedding $\theta^{t}_u$, $\forall u \in P$ 
  \begin{algorithmic}[1]

    \State  $({h^s_u})^0 \gets X_u$ ; $({h^t_u})^0 \gets X_u \forall u \in P$
    
    \For{$l$=1,...$L$} 
        \For{$u \in P$} 
            \State $({h^s_u})^l \gets \;\; \sigma\left( \sum_{(u,v) \in E_{cp}} ({h^t_v})^{l-1} W^l \right)$
                                              + $\sigma\left(\sum_{(u,v) \in E_{cv}} {(h^s_v})^{l-1} W^l \right)$
                                              
                                              
            \State $({h^t_u})^l \gets \;\; \sigma\left(\sum_{(v,u) \in E_{cp}} {(h^s_v})^{l-1} W^l \right) + \sigma\left( \sum_{(v,u) \in E_{cv}} {(h^t_v})^{l-1} W^l \right)$
        \EndFor 
        \State $({h^s_u})^l \gets ({h^s_u})^l$/$||({h^s_u})^l||_2, \forall u \in P$
        \State $({h^t_u})^l \gets ({h^t_u})^l$/$||({h^t_u})^l||_2, \forall u \in P$
    \EndFor 
    \State $\theta^{s}_u \gets ({h^s_u})^L,\forall u \in P$
    \State $\theta^{t}_u \gets ({h^t_u})^L,\forall u \in P$
\end{algorithmic}
\end{algorithm}
We set the initial source and target embedding of each product to its input feature. Let $({h^s_u})^l$ and $({h^t_u})^l$ denote a product's source and target representation in the $l^{th}$ step of Algorithm~\ref{algo:forward}. First, for each product $u$, extract its co-purchase and co-view neighbors. The source hidden node representation of $u$ in the $l^{th}$ step is aggregated as a linear combination of two non-linear terms i.e. non-linear aggregate of target representation of its co-purchased out-neighbors and non-linear aggregate of source representation of its co-viewed out-neighbors from the $(l-1)^{th}$ step. $W^l$ corresponds to fully connected layer at step $l$ with non-linear ReLU activation function $\sigma$ (line $4$). We perform a similar neighborhood aggregation to estimate a node's target representation during the $l^{th}$ step (line $5$). The source and target node representations of step $l$ is used in the next step.
We normalize the embeddings to a unit norm (lines $7$, $8$). We repeat this process for $L$ steps to generate the final source and target product representation of products $\{\theta^{s}_u, \theta^{t}_u\}\;\forall u \in P$ (line $10$, $11$). We leverage the generated product embeddings to recommend related products. We also show relevant properties of the embeddings in Lemma~\ref{lemma1} and \ref{lemma2}. \\
\textbf{Related product recommendation:} 
Given a query product $q$, we use $\theta^{s}_q$, the source embedding of $q$, to perform a nearest neighbor lookup in the target embedding space of all the products to recommend top-$k$ set of related products denoted by $R^q_k$. Specifically, for a query product $q \in P$, we compute a relevance score with respect to a candidate product $v \in P$ as shown in Equation~\ref{eq:rel}.
\begin{eqnarray}
\label{eq:rel}
rel(q,v) = (\theta_q^s) ^\intercal (\theta_v^t)
\end{eqnarray}
Observe that $rel(q,v) \neq rel(v,q)$ which helps capturing product asymmetry. \\
\textbf{Cold-start related product recommendation:} Our model can also be used for \textit{cold-start} product recommendation. Given a cold-start product $c$, we perform a neighborhood lookup to find similar existing (warm-start) products i.e. $\{c_1, c_2,..c_k\}$ based on its input product features ($X$). We augment the edges of the form $\{(c, c_1)$, $(c, c_2)$,..$(c, c_k)\}$ to the product graph $G$. Further, $c$ has an input feature $X_c$ from catalog metadata. We pass the subgraph corresponding to the cold-start product $c$ to the GNN model and generate the source and target embedding of $c$ using Algorithm~\ref{algo:forward}. We use $\theta^s_c$ to probe the target embeddings to recommend related products for $c$ using Equation~\ref{eq:rel}. 
\subsubsection{Learning \name~parameters: Backward Pass}
\label{backward}
In order to train the model in an unsupervised manner, 
we employ a novel asymmetric loss function shown in Equation~\ref{eq:gnnloss}. This helps us capture different aspects as follows:
\begin{enumerate}
    \item When $(u,v) \in E_{cp}$, the model forces the source embedding of $u$ to be similar to the target embedding of $v$, and distant from the target embedding of a disparate product $z$ (terms $1$ and $2$). Negative samples are generated using a uniform distribution $P_r$. This models product co-purchase likelihood and asymmetry.
    \item In order to force product asymmetry, for every one-way directed co-purchase edge i.e. $(u,v) \in E_{cp} \land (v,u) \notin E_{cp}$, our model assigns a high score to the edge $(u,v)$ and a low score to the edge $(v,u)$ (terms $3$ and $4$). 
    \item For similar products i.e. $(u,v) \in E_{cv}$, our model forces both the source and target embeddings of $u$ and $v$ to be similar (terms $5$ and $6$). This property is useful in cold-start recommendation and mitigating selection bias.  
\end{enumerate}
\begin{align}
 \label{eq:gnnloss}
    loss &=  - \left\{ \sum_{(u,v) \in E_{cp}} log(\sigma(\theta^s_u\cdot\theta^t_v)) + 
     \sum\limits_{\substack{n_s =1 \\ z \sim\ P_r(P), u \neq z, } }^{n_k} log(\sigma(1-\theta_u^s\cdot\theta_z^t)) \right\} \nonumber\\   
      &- \left \{ \sum\limits_{\substack{(u,v) \in E_{cp} \land \\ (v,u) \notin E_{cp}  } } log(\sigma(\theta_u^s\cdot\theta_v^t))  +
     \sum\limits_{\substack{(u,v) \in E_{cp} \land \\ (v,u) \notin E_{cp}  } } log(\sigma(1 - \theta_v^s\cdot\theta_u^t)) \right\} \nonumber\\ &- \left \{ \sum_{(u,v) \in E_{cv}} log(\sigma(\theta^s_u\cdot\theta^s_v)) + 
    \sum_{(u,v) \in E_{cv} } log(\sigma(\theta^t_u\cdot\theta^t_v)) 
    \right\}  \tag{2} 
\end{align}\\
\begin{lemma}
\label{lemma1}
The embeddings generated by \name~capture product co-purchase likelihood and product asymmetry.
\end{lemma}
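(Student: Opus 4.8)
The plan is to tie each clause of the claim---co-purchase likelihood and product asymmetry---directly to the terms of the training objective in Equation~\ref{eq:gnnloss} and to the relevance score in Equation~\ref{eq:rel}, and then to show that minimizing the loss drives the embeddings to realize each property. Throughout I would use two facts from the forward pass: the output embeddings are $\ell_2$-normalized (lines $7$--$8$ of Algorithm~\ref{algo:forward}), so every dot product $\theta^s_u\cdot\theta^t_v$ is a cosine similarity lying in $[-1,1]$; and the relevance score $rel(q,v)=(\theta^s_q)^\intercal\theta^t_v$ is in general not symmetric, since the source and target embeddings are produced by different computational graphs. The elementary analytic input is that $\log\sigma(x)$ is strictly increasing in $x$ (its derivative $1-\sigma(x)$ is positive), so minimizing any summand $-\log\sigma(f)$ is equivalent to maximizing its argument $f$.

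For co-purchase likelihood I would isolate terms $1$ and $2$. Minimizing $-\log\sigma(\theta^s_u\cdot\theta^t_v)$ over $(u,v)\in E_{cp}$ pushes $\theta^s_u\cdot\theta^t_v\to 1$, while minimizing $-\log\sigma(1-\theta^s_u\cdot\theta^t_z)$ over negatives $z\sim P_r$ pushes $\theta^s_u\cdot\theta^t_z\to -1$. Hence at the optimum $rel(u,v)$ is driven to its maximal value exactly for observed co-purchase pairs and to its minimal value for random pairs, so $rel(\cdot,\cdot)$ ranks true co-purchase edges strictly above disparate products. This is precisely the statement that the embeddings encode co-purchase likelihood.

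For asymmetry I would focus on a one-way edge, i.e. $(u,v)\in E_{cp}$ with $(v,u)\notin E_{cp}$, and read off terms $3$ and $4$. Term $3$, $-\log\sigma(\theta^s_u\cdot\theta^t_v)$, maximizes the forward score $rel(u,v)\to 1$; term $4$, $-\log\sigma(1-\theta^s_v\cdot\theta^t_u)$, maximizes $1-\theta^s_v\cdot\theta^t_u$ and therefore drives the reverse score $rel(v,u)=\theta^s_v\cdot\theta^t_u\to -1$. Consequently $rel(u,v)-rel(v,u)$ is pushed toward its extreme positive value, so $rel(u,v)\neq rel(v,u)$ and the asymmetry $aR_{cp}b\not\Rightarrow bR_{cp}a$ is preserved; because the source and target vectors live in distinct spaces, this gap is attainable and is not collapsed to zero as it would be for a symmetric similarity.

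The main obstacle is that the dot products are not free parameters: they are coupled outputs of a single shared-weight GNN subject to the unit-norm constraint, so I cannot independently drive every score to its target value, and a node participating in many edges induces competing gradient contributions. I would therefore phrase the argument at the level of gradient signs---each relevant summand contributes a gradient component that moves the corresponding score monotonically in the claimed direction---rather than asserting that a single global minimizer simultaneously attains all the extremal values; alternatively one may state the conclusion under an idealized-capacity assumption. Making this reconciliation of competing per-edge gradients precise, while respecting the normalization constraint, is the delicate part of the write-up.
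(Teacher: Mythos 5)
Your proposal is correct and follows essentially the same route as the paper's own proof: you map loss terms $1$--$2$ to co-purchase likelihood and terms $3$--$4$ to product asymmetry, arguing that minimizing each summand drives the corresponding score $\theta^s_u\cdot\theta^t_v$ toward the desired extreme. If anything, your treatment is more careful than the paper's --- the published proof simply asserts that the optimized scores satisfy $\theta^s_a\cdot\theta^t_b \gg \theta^s_a\cdot\theta^t_z$ and $\theta^s_a\cdot\theta^t_b \gg \theta^s_b\cdot\theta^t_a$ without addressing the coupling of scores through the shared-weight GNN or the unit-norm constraint, which you correctly flag as the genuinely delicate point.
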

\begin{proof}
When $(a,b) \in E_{cp} \leftrightarrow aR_{cp}b$, our model assigns a high score to $\theta^{s}_a.\theta^{t}_b$ compared to $\theta^{s}_a.\theta^{t}_z$ i.e. $\theta^{s}_a.\theta^{t}_b >> \theta^{s}_a.\theta^{t}_z$ (for a random product $z$). This implies that the co-purchase likelihood between related products $a$ and $b$ is preserved. 
Further, when $(a,b) \in E_{cp} \land (b,a) \notin E_{cp} \leftrightarrow aR_{cp}b \land \NOT bR_{cp}a$, our model will assign a higher score to $\theta^{s}_a.\theta^{t}_b$ compared to $\theta^{s}_b.\theta^{t}_a$ i.e. $\theta^{s}_a.\theta^{t}_b >> \theta^{s}_b.\theta^{t}_a$, thereby capturing product asymmetry.
\end{proof}

\begin{lemma}
\label{lemma2}
The embeddings generated by \name~helps mitigating selection bias.
\end{lemma}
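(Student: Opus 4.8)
The plan is to prove this statement in the same spirit as Lemma~\ref{lemma1}, by showing that the embeddings produced under the loss of Equation~\ref{eq:gnnloss} encode the two transitive relationships introduced in Section~\ref{graph-construction} that characterise selection bias, namely rule~2, $aR_{cp}b \land bR_{cv}c \implies aR_{cp}c$, and rule~3, $aR_{cv}b \land bR_{cp}c \implies aR_{cp}c$. Mitigating selection bias then reduces to showing that whenever these antecedents hold, the relevance score $rel(a,c) = (\theta^s_a)^\intercal \theta^t_c$ of Equation~\ref{eq:rel} is high, so that $c$ is surfaced among the top-$k$ recommendations for the query $a$ even when $(a,c) \notin E_{cp}$.

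First I would isolate the two mechanisms the loss supplies. From the last bracket of Equation~\ref{eq:gnnloss} (terms~5 and~6), every co-view edge $(u,v) \in E_{cv}$ drives both $\theta^s_u \cdot \theta^s_v$ and $\theta^t_u \cdot \theta^t_v$ toward their maximum; since the embeddings are unit-normalised (lines~7--8 of Algorithm~\ref{algo:forward}), this forces $\theta^s_u \approx \theta^s_v$ and $\theta^t_u \approx \theta^t_v$. From the first bracket (terms~1 and~2), and as already established in Lemma~\ref{lemma1}, every co-purchase edge $(a,b) \in E_{cp}$ makes $(\theta^s_a)^\intercal \theta^t_b$ large relative to a random target. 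These are the only two facts I need.

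For rule~2, assume $aR_{cp}b$ and $bR_{cv}c$: the co-purchase edge makes $(\theta^s_a)^\intercal \theta^t_b$ large, the co-view edge gives $\theta^t_b \approx \theta^t_c$, and substituting yields $rel(a,c) = (\theta^s_a)^\intercal \theta^t_c \approx (\theta^s_a)^\intercal \theta^t_b$, which is large, i.e. $aR_{cp}c$. For rule~3, assume $aR_{cv}b$ and $bR_{cp}c$: now the co-view edge gives $\theta^s_a \approx \theta^s_b$ and the co-purchase edge makes $(\theta^s_b)^\intercal \theta^t_c$ large, so $rel(a,c) = (\theta^s_a)^\intercal \theta^t_c \approx (\theta^s_b)^\intercal \theta^t_c$ is large, again giving $aR_{cp}c$. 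In both cases the products $c$ that were merely co-viewed (e.g. $PC1$, $PC2$, $PC3$ in Figure~\ref{fig:prob}) but never co-purchased with the query inherit a high relevance score, which is precisely the correction for selection bias.

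The main obstacle is that this chaining rests on propagating an approximate embedding equality through the inner product: the loss only pushes co-viewed embeddings to be close, not identical, so in principle I must argue that writing $\theta^t_c = \theta^t_b + \epsilon$ perturbs the score by $|(\theta^s_a)^\intercal\epsilon| \le \|\theta^s_a\|\,\|\epsilon\| = \|\epsilon\|$, bounded because normalisation gives $\|\theta^s_a\| = 1$ and $\|\epsilon\|$ is driven small by terms~5--6. Making this bound quantitative, rather than appealing informally to $\approx$ as Lemma~\ref{lemma1} does, is the delicate part; the cleaner route is to keep the argument at the same qualitative level as Lemma~\ref{lemma1} and merely exhibit that each antecedent of rules~2 and~3 activates loss terms that cooperate to raise $rel(a,c)$.
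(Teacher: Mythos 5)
Your proof takes essentially the same route as the paper's: chain the high co-purchase score $\theta^s_a\cdot\theta^t_b$ (terms 1--2 of the loss) through the co-view-induced closeness of $\theta^t_b$ and $\theta^t_c$ (terms 5--6) to conclude that $rel(a,c)$ is high, and symmetrically for rule~3, which the paper dispatches with ``a similar argument.'' If anything, your substitution step with the explicit perturbation bound $|(\theta^s_a)^\intercal\epsilon|\le\|\epsilon\|$ is sounder than the paper's corresponding step, which asserts the decomposition $\theta^{s}_a\cdot\theta^{t}_c = (\theta^{s}_a\cdot\theta^{t}_b)\times(\theta^{t}_b\cdot\theta^{t}_c)$ --- not a valid vector identity, merely the same approximation stated less carefully.
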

\begin{proof}
For $a,b,c\in P$, $G$ consists paths of the form $\{a,b,c\}$ where, $(a,b) \in E_{cp} \leftrightarrow aR_{cp}b$ and $(b,c) \in E_{cv} \leftrightarrow bR_{cv}c$. Our model assigns a high score to $\theta^{s}_a.\theta^{t}_b$ as $(a,b) \in E_{cp}$. As $(b,c) \in E_{cv}$, our model assigns a high score to $\theta^{s}_b.\theta^{s}_c$ and $\theta^{t}_b.\theta^{t}_c$. Hence, our model assigns a high score to $ \theta^{s}_a.\theta^{t}_c = \theta^{s}_a.\theta^{t}_b \times \theta^{t}_b.\theta^{t}_c$, implying $aR_{cp}c$. We can present a similar argument to show that the embeddings capture $aR_{cv}b \;\land \; bR_{cp}c \implies aR_{cp}c$. In this manner, our model mitigates selection bias.
\end{proof}
In terms of scalability, we employ minibatch sampling both during training and inference procedure to generate product embeddings. We use FAISS~\cite{faiss} to perform efficient nearest neighbor lookup. Our model uses O($|P|$) space to store embeddings corresponding to $|P|$ products. 
We discuss the results in the next section.

\section{Experiments}
As previously discussed, RPR problem is equivalent to the node recommendation task in directed graphs. In this section, we evaluate the performance of \name~against state-of-the-art models in directed graphs. Specifically, we aim to answer the following evaluation questions:
\begin{itemize}
    \item \textbf{EQ1:} Is \name~ able to improve the node recommendation performance on real-world e-commerce datasets?
    \item \textbf{EQ2:} How effective is \name~in capturing the co-purchase likelihood between products?
    \item \textbf{EQ3:} Is \name~able to capture product asymmetry?
    \item \textbf{EQ4:} How effective is \name~for cold-start product recommendation?
    \item \textbf{EQ5:} Is \name~able to combat selection bias?
    \item \textbf{EQ6:} Semantically, what kind of relationships between products can be extracted using the source and target embeddings generated by \name?
\end{itemize}
To this end, we introduce the datasets, baselines and follow it up with experiments to answer these questions. 

\subsection{Experimental Setting}

\subsubsection{Datasets} We extract two real-world datasets sampled from different emerging marketplaces in Amazon. For each dataset, we construct a graph consisting of products as nodes and edges corresponding to either a co-purchase or co-view relationship as explained in Section~\ref{graph-construction}. Note that we use anonymized browse logs while creating product co-view pairs. The statistics of the graph datasets is shown in Table~\ref{datasets}. These graphs consist $2$-$5.5$M nodes and $14$-$32$M edges and are directed ($\NOT 75\%$ of the edges are directed). In both the datasets, we represent each product using input features of size $384$ and $512$ respectively. These datasets are sampled, and not reflective of production traffic in terms of scale.
\begin{table}[!htb]
\caption{Summary of graph datasets employed}
\label{datasets}
\centering
\scalebox{0.7}{
\begin{tabular}{cccccccc} 
\toprule
\multicolumn{1}{l}{\textbf{Dataset}} & \multicolumn{1}{l}{\textbf{\# Nodes}} & \multicolumn{1}{l}{\textbf{\# Edges}} & \begin{tabular}[c]{@{}c@{}}\textbf{Average}\\\textbf{ Degree}\end{tabular} & \begin{tabular}[c]{@{}c@{}}\textbf{\%Directed}\\\textbf{ Edges}\end{tabular} & \begin{tabular}[c]{@{}c@{}}\textbf{\# Co-purchase }\\\textbf{ pairs }\end{tabular} & \begin{tabular}[c]{@{}c@{}}\textbf{\# Co-view}\\\textbf{ pairs}\end{tabular}  & \begin{tabular}[c]{@{}c@{}}\textbf{Input Product}\\\textbf{ Feature Dimension}\end{tabular}\\ 
\midrule
G1                                   & 1.98M                            & 14.1M                            & 7.3                                                                        & 76.33                                                                        & 7M                                                                            & 7.1M      & 384                                                               \\
G2                                   & 5.5M                             & 31.7M                            & 5.76                                                                       & 79.97                                                                        & 13.2M                                                                         & 18.5M        & 512                                                            \\
\bottomrule
\end{tabular}}
\end{table}
\subsubsection{Implementation Details}
We implemented \name~using DGL and PyTorch. We vary the learning rate in $\{ 10^{-1}, 10^{-2}, 10^{-3}, 10^{-4}\}$ and observed $10^{-4}$ using Adam's optimizer to work the best. We use $L=3$ layer GNN model for \name. We use minibatches of size $1024$ during model training (Section~\ref{backward}) and inference (Section~\ref{forward}). We also use layer-wise neighborhood sampling, i.e. $20$ neighbors for the first layer and $10$ random neighbors for the subsequent layers. After generating the product embeddings, we perform nearest neighbor lookup to suggest top-$k$ related products for $k$ values in $\{5,10,20\}$. All the experiments were conducted on a 64-core machine with a 488 GB RAM running Linux. For all models, we learn $64$ dimensional product embeddings trained for a maximum of $30$ epochs. We repeat all the experiments $10$ times and report the average value across the runs. 

\subsubsection{Baselines}
We previously discussed state-of-the-art models in directed graphs (Section~\ref{relatedwork}). In order to evaluate our proposed model, \name, we choose the most competitive baselines as follows:
\begin{enumerate}
    \item We choose APP~\cite{app} and NERD~\cite{nerd} as they deliver superior performance compared to deepwalk\cite{deepwalk}, node2vec~\cite{node2vec}, LINE~\cite{line}, HOPE\cite{hope} and VERSE~\cite{verse}.
    \item Gravity GAE~\cite{salha2019gravity} and DGGAN~\cite{zhu2020adversarial} outperform deepwalk~\cite{deepwalk}, node2vec~\cite{node2vec}, LINE~\cite{line}, APP\cite{app}, HOPE~\cite{hope} and VGAE~\cite{kipf2016variational}.
    \item MagNet~\cite{magnet} is the latest state-of-the-art GNN model which delivers best results compared to GraphSage~\cite{sage}, GAT~\cite{gat}, DGCN~\cite{dgcn} and APPNP~\cite{appnp}.
\end{enumerate}
In summary, we choose APP~\cite{app}, NERD~\cite{nerd}, Gravity GAE~\cite{salha2019gravity}, DGGAN~\cite{zhu2020adversarial} and MagNet~\cite{magnet} as the most competitive baselines. We use publicly released code repositories for all the employed baselines. Further, we employ parameter tuning to choose the best parameters for each baseline. 

However, all the baselines are suitable only for homogeneous directed graphs. For fair comparison, we restrict the evaluation to co-purchase directed graphs (edges corresponding to $E_{cp}$), and evaluate \name~against the baselines (Section~\ref{task1}, Section~\ref{task2}). For the case of complete data ($E_{cp} \cup E_{cv}$), we compare \name~against state-of-the-art R-GCN~\cite{rgcn} model in Section~\ref{task3}.
\subsubsection{Experiment Setup} 
In order to answer the evaluation questions, 
we setup the experiments (similar to ~\cite{app}, ~\cite{magnet}, ~\cite{sage}) as follows:
\begin{itemize}
    \item $[$EQ1$]$. Node Recommendation Task (Section~\ref{task1}): For each graph dataset, we use $75\%$, $5\%$ and $20\%$ non-overlapping edges for training, validation and testing respectively. For a ground-truth recommendation $(u,v)$ (from the test data), where $u$ is the query node, we retrieve top-$k$ node recommendations $(R^q_k)$ suggested by each model. In order to evaluate the quality of recommendations, we use HitRate@k and MRR@k for k in $\{5,10,20\}$. This helps answering EQ1. 
    \item $[$EQ2$]$. Existential link prediction Task (Section~\ref{task2}): For each graph dataset, we use $75\%$, $5\%$ and $20\%$ non-overlapping edges for training, validation and testing respectively. In this task, we want to capture the edge score predicted by each model; a good model assigns higher scores to existing links compared to non-existing links. We evaluate the performance of various models using AUC for this task. This helps answering EQ2.
    \item $[$EQ3$]$. Directed link prediction Task (Section~\ref{task2}): For each graph dataset, we use $75\%$, $5\%$ and $20\%$ non-overlapping edges for training, validation and testing respectively. In this task, we consider only one-way directed edges ($(u,v) \in G \land (v,u) \notin G$) as the test edges. We reverse the direction of the test edges to create negative links for testing. We want to evaluate how different models capture the edge direction; a good model assigns a high score to the correct edge (positive links in the test set) and a low score to the reverse edge (negative links in the test set). We evaluate the model performance using AUC for this task. This helps answering EQ3.
    \item $[$EQ4$]$. Cold-start Recommendation Task (Section~\ref{task3}): For each graph dataset, we use subgraphs corresponding $75\%$, $5\%$ and $20\%$ non-overlapping nodes for training, validation and testing respectively. For evaluation, we take the same approach as employed in the node recommendation task. This helps answering EQ4.
    \item $[$EQ5$]$. Selection-bias Recommendation Task (Section~\ref{task3}): For each graph dataset, we use edges corresponding $75\%$, $5\%$ and $20\%$ non-overlapping nodes for training, validation and testing respectively. \textit{Further, we add edges corresponding to transitive relationships $aR_{cp}b \land bR_{cv}c \implies aR_{cp}c$ to the test set.} In order to mitigate selection bias, a model needs to recommend $c$ as a related product for $a$ as these transitive relationships are present in the training graph. For evaluation, we take the same approach as employed in the node recommendation task. This helps answering EQ5.
\end{itemize}
\subsection{EQ1. Node Recommendation Task on Co-purchase Data}
\label{task1}
The results\footnote{Results are relative to the co-purchase baseline and absolute numbers are not presented due to confidentiality.} for the node recommendation task is shown in Table~\ref{tab:eval1}. We see that \name~performs the best for this task in terms of both HitRate and MRR. 
\name~yields more gains on the bigger $G2$ dataset compared to the $G1$ datset.
Further, DGGAN
fails to complete model training in 48 hours. It does not scale to real-world datasets; the paper show their model efficacy on smaller datasets (the biggest graph has ~15K nodes). Further, Gravity GAE runs out of memory (488 GB RAM) during model training. This is due to one-hot feature encoding employed by the model to represent node input features i.e. each node is represented as a million valued embedding initially. APP delivers the second best performance. MagNet is not applicable for node recommendation task, and we compare against it for the link prediction tasks.
\begin{table}[!htb]
\centering
\caption{Node recommendation. Best results are in \textbf{bold} and second are \uline{underlined}}
\label{tab:eval1}
\scalebox{0.8}{
\begin{tabular}{cccccccc} 
\toprule
\multirow{2}{*}{\begin{tabular}[c]{@{}c@{}}\\Dataset\\~\\\end{tabular}} & \multirow{2}{*}{Model} & \multicolumn{3}{c}{HitRate@k}                       & \multicolumn{3}{c}{MRR@k}                            \\ 
\cmidrule{3-8}
                                                                        &                        & 5               & 10              & 20              & 5               & 10              & 20               \\ 
\midrule
\multirow{5}{*}{G1}                                                     & APP                    & \uline{2.14x} & \uline{4.01x } & \uline{6.95x } & \uline{1.02x } & \uline{1.26x } & \uline{1.46x }  \\
                                                                        & NERD                   & 0.62x          & 1.94x          & 0.0314          & 0.61x          & 0.71x          & 0.79x           \\
                                                                        
                                                                        & \name           & \textbf{3.07x} & \textbf{5.54x} & \textbf{8.98x} & \textbf{1.45x} & \textbf{1.77x} & \textbf{2.01x}  \\ 
\cmidrule{2-8}
                                                                        & \%Gain                 & 43.4            & 38.15           & 29.2            & 42.15           & 40.47           & 37.6             \\ 
\midrule
\multirow{5}{*}{G2}                                                     & APP                    & 1.23x          & \uline{1.99x } & \uline{3.31x } & 0.67x          & 0.77x          & \uline{0.85x }  \\
                                                                        & NERD                   & \uline{1.28x } & 1.85x          & 2.63x          & \uline{0.71x } & \uline{0.78x } & 0.84x           \\
                                                                        & \name           & \textbf{3.51x} & \textbf{5.87x} & \textbf{8.6x}  & \textbf{1.71x} & \textbf{2.02x} & \textbf{2.21x}  \\ 
\cmidrule{2-8}
                                                                        & \%Gain                 & 174             & 194             & 159.8           & 155.2           & 162.3           & 160              \\
\bottomrule
\end{tabular}}
\end{table}

\subsection{[EQ2,EQ3.] Link Prediction Tasks on Co-purchase Data}
\label{task2}

\begin{table}[H]
\centering
\caption{Link prediction (\%). Best results are in \textbf{bold} and second are \uline{underlined}}
\label{tab:eval2}
\scalebox{0.8}{
\begin{tabular}{cccc} 
\toprule
Dataset & Model        & \begin{tabular}[c]{@{}c@{}}Existence Prediction\\(AUC \%)\end{tabular} & \begin{tabular}[c]{@{}c@{}}Direction Prediction~\\(AUC \%)\end{tabular}  \\ 
\midrule
\multirow{7}{*}{G1}                                     & APP          & 34.9x                                                                   & 0.1x                                                                     \\
                                                        & NERD         & 23.3x                                                                  & 7.1x                                                                    \\
                                                        & MagNet       & \uline{36.1x }                                                          & \uline{12.65x}                                                           \\
                                                        & \name & \textbf{40.31x}                                                         & \textbf{14.72x}                                                           \\ 
\cmidrule{2-4}
                                                        & $\Delta$ Gain         & 4.2                                                                    & 2.25                                                                     \\ 
\midrule
\multirow{7}{*}{G2}                                     & APP          & 5.27x                                                                  & 0.15x                                                                   \\
                                                        & NERD         & 14.45x                                                                  & 2.48x                                                                   \\
                                                        & MagNet       & \uline{18.26x }                                                         & \uline{6.62x}                                                           \\
                                                        & \name & \textbf{34.28x}                                                         & \textbf{12.53x}                                                           \\ 
\cmidrule{2-4}
                                                        & $\Delta$ Gain         & 16                                                                     & 6.6                                                                      \\
\bottomrule
\end{tabular}}
\end{table}

We present the performance\footnote{Results are relative to the co-purchase baseline and absolute numbers are not presented due to confidentiality.} of different models in Table~\ref{tab:eval2} for two link prediction tasks. We see that \name~outperforms all baselines on both these tasks. MagNet displays the second best performance. We make a note that predicting existence of links is a simpler task when compared to predicting the correct edge direction; this is also observed in the results. 
As explained previously, DGGAN fails to complete model training in 48 hours and Gravity GAE runs out of memory (488GB RAM) during model training. These results indicate that the proposed model is able to jointly model co-purchase likelihood and product asymmetry compared to the baselines which answers \textbf{EQ2} and \textbf{EQ3}.

\subsection{[EQ4, EQ5, EQ6.] Ablation Study on G1 graph}
\label{task3}
We evaluate\footnote{Results are relative to the R-GCN baseline and absolute numbers are not presented due to confidentiality.} \name~on the complete graph in this section to analyze its performance for the case of cold-start product recommendation task (Table~\ref{tab:eval3}) and the case of mitigating selection bias (Table~\ref{tab:eval32}). Observe from Table~\ref{tab:eval3} that \name~delivers significant performance gains over R-GCN for cold-start recommendation evaluation. Although, both models were fed with the same cold-start graphs, observe that R-GCN performs poorly. Results indicate how leveraging catalog product information to create subgraphs pertaining to cold-start products aid in boosting the performance of our model. This answers \textbf{EQ4}.

\begin{table}
\centering
\caption{Cold-start evaluation on G1 graph}
\label{tab:eval3}
\scalebox{0.6}{
\begin{tabular}{ccccccc}
\multicolumn{1}{l}{}                            & \multicolumn{1}{l}{} & \multicolumn{5}{l}{}                                                                    \\ 
\toprule
\multirow{2}{*}{Model}                          & \multicolumn{3}{c}{HitRate@k}                            & \multicolumn{3}{c}{MRR@k}                           \\
                                                & 5                    & 10              & 20              & 5               & 10              & 20              \\ 
\midrule
\begin{tabular}[c]{@{}c@{}}R-GCN\\\end{tabular} & x                    & x               & x               & x               & x               & x               \\
\begin{tabular}[c]{@{}c@{}}\name\\\end{tabular}  & \textbf{34.14x}      & \textbf{22.02x} & \textbf{15.52x} & \textbf{98.05x} & \textbf{86.75x} & \textbf{51.4x}  \\
\bottomrule
\end{tabular}}
\end{table}

Table~\ref{tab:eval32} shows the results pertaining to selection bias. Observe that when the model is trained only the co-purchase edges, it results in a performance dip. This is because the transitive relationships crafted in the test set cannot be captured when trained only on the co-purchase edges. However, R-GCN trained on both co-purchase and co-view edges delivers a poor performance which confirms our claim that we need to provide differential treatments to co-purchase and co-view edges during feature aggregation. \name~delivers the best performance compared to R-GCN model. These results demonstrate how \name~is able to mitigate selection and this answers \textbf{EQ5}.
\begin{table}
\centering
\caption{Selection bias evaluation on G1 graph}
\label{tab:eval32}
\scalebox{0.7}{
\begin{tabular}{ccccccc}
\multicolumn{1}{l}{}                                                 & \multicolumn{1}{l}{} & \multicolumn{5}{l}{}                                                               \\ 
\toprule
\multirow{2}{*}{Model}                                               & \multicolumn{3}{c}{HitRate@k}                         & \multicolumn{3}{c}{MRR@k}                         \\
                                                                     & 5                    & 10            & 20             & 5              & 10             & 20              \\ 
\midrule
\begin{tabular}[c]{@{}c@{}}R-GCN\\(co-purchase+co-view)\end{tabular} & x                    & x             & x              & x              & x              & x               \\
\begin{tabular}[c]{@{}c@{}}\name\\(co-purchase)\end{tabular}          & 3.1x                 & 3.26x         & 3.5x           & 4.26x          & 4.02x          & 3.94x           \\
\begin{tabular}[c]{@{}c@{}}\name\\(co-purchase+co-view)\end{tabular}  & \textbf{3.43x}       & \textbf{3.5x} & \textbf{3.59x} & \textbf{4.58x} & \textbf{4.32x} & \textbf{4.21x}  \\
\bottomrule
\end{tabular}}
\end{table}

In order to answer \textbf{EQ6}, we assume a query product $q$. We generate a related product $v$, when $\theta^s_q.\theta^t_v$ is the highest $\forall v \in P$. Similarly, given a query product $q$, we generate a similar product $v$, when $\theta^s_q.\theta^s_v$ is the highest $\forall v \in P$. Observe from Table~\ref{tab:eval7} the kind of recommendations generated. This shows that we can leverage both the source and target embeddings generated by \name~to recommend not only related products, but also similar items as a byproduct. 
\begin{table}[!htb]
\centering
\caption{Sample product recommendations generated using \name}
\label{tab:eval7}
\scalebox{0.7}{
\begin{tabular}{cccc} 
\toprule
\textbf{Sl. No} & \textbf{query product}                                                                            & \begin{tabular}[c]{@{}c@{}}\textbf{related product}\\\textbf{recommendation}\end{tabular} & \begin{tabular}[c]{@{}c@{}}\textbf{similar product}\\\textbf{recommendation}\end{tabular}             \\ 
\midrule
1               & \begin{tabular}[c]{@{}c@{}}amway attitude \\\uline{sunscreen} cream 100 g\end{tabular}                    & \begin{tabular}[c]{@{}c@{}}amway attitude \uline{face} \\\uline{wash} for dry skin - 100 ml\end{tabular}      & \begin{tabular}[c]{@{}c@{}}amway attitudes \uline{sunscreen} \\cream 100 g x 2 = 200 gm\end{tabular}           \\
2               & \begin{tabular}[c]{@{}c@{}}oppo reno \uline{phone} 6 pro 5g stellar \\black 12gb ram 256gb storage\end{tabular} & \begin{tabular}[c]{@{}c@{}}lustree oppo reno 6 pro 5g \\\uline{bumper back cover case}~\end{tabular}  & \begin{tabular}[c]{@{}c@{}}oppo reno \uline{phone} 6 5g stellar \\black 8gb ram 128gb storage\end{tabular}           \\
3               & \begin{tabular}[c]{@{}c@{}}paseo \uline{tissues} pocket \\hanky 6 packs 3 ply\end{tabular}                & \begin{tabular}[c]{@{}c@{}}origami \uline{wet wipes} wet \\tissue wet facial tissue\end{tabular}      & \begin{tabular}[c]{@{}c@{}}inkulture pocket hanky \uline{tissue} \\paper white pack of 10\end{tabular}        \\
4               & \begin{tabular}[c]{@{}c@{}}yiwoo 5 pieces false \\\uline{eyelashes curler}~\end{tabular}                  & \begin{tabular}[c]{@{}c@{}}ardell duo individual \uline{lash} \\\uline{adhesive} white 7 g\end{tabular}       & vega premium \uline{eye lash curler}                                                                          \\
5               & \begin{tabular}[c]{@{}c@{}}skinn by titan steele \\\uline{fragrance} for men 100 ml\end{tabular}  & \begin{tabular}[c]{@{}c@{}}blue nectar natural vitamin c \uline{face} \\\uline{cream} for glowing skin~\end{tabular} & \begin{tabular}[c]{@{}c@{}}skinn by titan \\\uline{fragrance} for men 20ml\end{tabular} \\
\bottomrule
\end{tabular}}
\end{table}

\subsection{Online Platform Performance}
\label{abtest}
We further evaluate the performance of \name~in production environment by conducting an A/B test in two different marketplaces. For the control group, we use an incumbent approach based on product co-purchases, while for the treatment group, we show the recommendations generated from \name. We run the experiments for $4$ weeks and observe +170\% improvement on product sales and +190\% improvement on profit gain. All the results are statistically significant with p-value $<0.05$. These results show the product recommendations generated from \name~can significantly improve customer shopping experience in discovering potentially related products of interest.

\section{Conclusion and Future Work}
In this paper, we propose \name, a novel Graph Neural Network based framework for related product recommendation, wherein the problem is formulated as a node recommendation task on a directed product graph. In order to train the model, we employ an asymmetric loss function by modelling product co-purchase likelihood, capturing product asymmetry and mitigating selection bias. We leverage the trained GNN model to learn dual embeddings for each product, by appropriately aggregating features from its neighborhood. Extensive offline experiments show that \name~outperforms state-of-the-art baselines for the node recommendation and link prediction tasks against state-of-the-art baselines. 
In the future, we will explore the possibility of representing products using complex valued embeddings and extend \name~to the case of complex embeddings to improve the performance. 
\bibliographystyle{splncs04}
\bibliography{main}

\end{document}